\newtheorem{theorem}{Theorem}
\newtheorem{lemma}[theorem]{Lemma}
\newtheorem{corollary}[theorem]{Corollary}
\newcommand{\depthb}{\mathrm{depth}}
\newcommand{\parentb}{\mathrm{parent}}
\newcommand{\degreeb}{\mathrm{degree}}
\newcommand{\childrankb}{\mathrm{child\_rank}}
\newcommand{\childselectb}{\mathrm{child\_select}}
\newcommand{\numdescendantsb}{\mathrm{num\_descendants}}
\newcommand{\firstchildb}{\mathrm{first\_child}}
\newcommand{\lastchildb}{\mathrm{last\_child}}
\newcommand{\nextsiblingb}{\mathrm{next\_sibling}}
\newcommand{\prevsiblingb}{\mathrm{prev\_sibling}}
\newcommand{\heightb}{\mathrm{height}}
\newcommand{\lcab}{\mathrm{lca}}
\newcommand{\levelancestorb}{\mathrm{level\_ancestor}}
\newcommand{\levelnextb}{\mathrm{level\_next}}
\newcommand{\levelprevb}{\mathrm{level\_prev}}
\newcommand{\levellmostb}{\mathrm{level\_lmost}}
\newcommand{\levelrmostb}{\mathrm{level\_rmost}}
\newcommand{\encloseb}{\mathrm{enclose}}
\newcommand{\depth}[1]{\depthb(#1)}
\newcommand{\parent}[1]{\parentb(#1)}
\newcommand{\degree}[1]{\degreeb(#1)}
\newcommand{\childrank}[1]{\childrankb(#1)}
\newcommand{\childselect}[2]{\childselectb(#1,#2)}
\newcommand{\numdescendants}[1]{\numdescendantsb(#1)}
\newcommand{\firstchild}[1]{\firstchildb(#1)}
\newcommand{\lastchild}[1]{\lastchildb(#1)}
\newcommand{\nextsibling}[1]{\nextsiblingb(#1)}
\newcommand{\prevsibling}[1]{\prevsiblingb(#1)}
\newcommand{\height}[1]{\heightb(#1)}
\newcommand{\lca}[2]{\lcab(#1,#2)}
\newcommand{\levelancestor}[2]{\levelancestorb(#1,#2)}
\newcommand{\levelnext}[1]{\levelnextb(#1)}
\newcommand{\levelprev}[1]{\levelprevb(#1)}
\newcommand{\levellmost}[2]{\levellmostb(#1,#2)}
\newcommand{\levelrmost}[2]{\levelrmostb(#1,#2)}
\newcommand{\enclose}[1]{\encloseb(#1)}
\newcommand{\sumb}{\mathrm{sum}}
\newcommand{\fwdsearchxb}{\mathrm{fwd\_search}}
\newcommand{\bwdsearchxb}{\mathrm{bwd\_search}}
\newcommand{\fwdsearchb}{\mathrm{fwd\_search}_\geq}
\newcommand{\bwdsearchb}{\mathrm{bwd\_search}_\geq}
\newcommand{\rmqb}{\mathrm{rmq}}
\newcommand{\rmqib}{\mathrm{rmqi}}
\newcommand{\RMQb}{\mathrm{RMQ}}
\newcommand{\RMQib}{\mathrm{RMQi}}
\newcommand{\mincountb}{\mathrm{min\_count}}
\newcommand{\minselectb}{\mathrm{min\_select}}
\newcommand{\minsumb}{\mathrm{min\_sum}}
\newcommand{\cfwdsearchb}{\mathrm{compute\_fwd\_search}}
\newcommand{\cmincountb}{\mathrm{compute\_min\_count}}
\newcommand{\sumf}[4]{\sumb(#1,#2,#3,#4)}
\newcommand{\fwdsearchfx}[4]{\fwdsearchxb(#1,#2,#3,#4)}
\newcommand{\bwdsearchfx}[4]{\bwdsearchxb(#1,#2,#3,#4)}
\newcommand{\fwdsearchf}[4]{\fwdsearchb(#1,#2,#3,#4)}
\newcommand{\bwdsearchf}[4]{\bwdsearchb(#1,#2,#3,#4)}
\newcommand{\rmqf}[4]{\rmqb(#1,#2,#3,#4)}
\newcommand{\rmqif}[4]{\rmqib(#1,#2,#3,#4)}
\newcommand{\RMQf}[4]{\RMQb(#1,#2,#3,#4)}
\newcommand{\RMQif}[4]{\RMQib(#1,#2,#3,#4)}
\newcommand{\mincountf}[4]{\mincountb(#1,#2,#3,#4)}
\newcommand{\minselectf}[5]{\minselectb(#1,#2,#3,#4,#5)}
\newcommand{\fwdsearch}[3]{\fwdsearchb(#1,#2,#3)}
\newcommand{\bwdsearch}[3]{\bwdsearchb(#1,#2,#3)}
\newcommand{\rmq}[3]{\rmqb(#1,#2,#3)}
\newcommand{\RMQ}[3]{\RMQb(#1,#2,#3)}
\newcommand{\mincount}[3]{\mincountb(#1,#2,#3)}
\newcommand{\minsum}[4]{\mincountb(#1,#2,#3,#4)}
\newcommand{\cfwdsearch}[3]{\cfwdsearchb(#1,#2,#3)}
\newcommand{\cmincount}[3]{\cmincountb(#1,#2,#3)}
\newcommand{\arraymax}[1]{M^f_{#1}}
\newcommand{\arraymaxdiff}[1]{D^f_{#1}}
\newcommand{\arraylast}[1]{L^f_{#1}}
\newcommand{\arraylastpi}[1]{L^\pi_{#1}}
\newcommand{\arraymin}[1]{m^\pi_{#1}}
\newcommand{\arraymincount}[1]{N^\pi_{#1}}
\newcommand{\arraymincountcap}[1]{\hat{N}^\pi_{#1}}
\newcommand{\rank}[3]{\mathrm{rank}_{#1}(#2,#3)}
\newcommand{\select}[3]{\mathrm{select}_{#1}(#2,#3)}
\newcommand{\next}[3]{\mathrm{next}_{#1}(#2,#3)}
\newcommand{\prev}[3]{\mathrm{prev}_{#1}(#2,#3)}
\newcommand{\substr}[3]{{#1[#2..#3]}}
\def\@endtheorem{\endtrivlist}
\begin{document}

\title{Succinct data structure for dynamic trees with faster queries}
\author{Dekel Tsur%
\thanks{Department of Computer Science, Ben-Gurion University of the Negev.
Email: \texttt{dekelts@cs.bgu.ac.il}}}
\date{}
\maketitle

\begin{abstract}
Navarro and Sadakane [TALG 2014] gave a dynamic succinct data structure for
storing an ordinal tree.
The structure supports tree queries in either $O(\log n/\log\log n)$
or $O(\log n)$ time, and insertion or deletion of a single node in $O(\log n)$
time.
In this paper we improve the result of Navarro and Sadakane by reducing the
time complexities of some queries (e.g.\ degree and level\_ancestor) from
$O(\log n)$ to $O(\log n/\log\log n)$.
\end{abstract}

\section{Introduction}
A problem which was extensively studied in recent years is designing a
succinct data structure that stores a tree while supporting queries on the tree,
like finding the parent of a node, or computing the lowest common ancestor
of two nodes.
This problem has been studied both for static trees~\cite{Jacobson89,MunroR01,BenoitDMRRR05,DelprattRR06,GearyRR06,GearyRRR06,GolynskiGGRR07,RamanRS07,HeMS12,JanssonSS12,MunroRRR12,FarzanM14,NavarroS14}
and dynamic
trees~\cite{MunroRS01,RamanR03,FarzanM11,ArroyueloDS16,GuptaHSV07,NavarroS14}.

For dynamic ordinal trees,
Farzan and Munro~\cite{FarzanM11} gave a data structure with
$O(1)$ query time and $O(1)$ amortized update time.
However, the structure supports only a limited set of queries, and the update
operations are restricted (insertion of a leaf, insertion of a node in the
middle of an edge, deletion of a leaf, and deletion of a node with one child).
A wider set of queries is supported by the data structure of
Gupta et al.~\cite{GuptaHSV07}. This data structure has $O(\log\log n)$ query
time and $O(n^\epsilon)$ amortized update time.
The data structure of Navarro and Sadakane~\cite{NavarroS14} supports a large
set of queries.
See Table~\ref{tab:queries} for some of the supported queries.
The structure supports the following update operations
(1) Insertion of a node $x$ as a child of an existing node $y$.
The insert operation specifies a (possibly empty)
consecutive range of children of $y$ and these nodes become children of
$x$ after the insertion.
(2) Deletion of a node $x$. The children of $x$ become children of the parent of
$x$.
The time complexity of a query is either $O(\log n/\log\log n)$ or
$O(\log n)$ (see Table~\ref{tab:queries}). Moreover, the time complexity of
insert and delete operations is $O(\log n)$.
Additionally, by dropping support for $\degreeb$, $\childrankb$, and
$\childselectb$ queries, the time complexity of insert and delete operations
can be reduced to $O(\log n/\log\log n)$.

In this paper, we improve the result of Navarro and Sadakane by reducing the
time for the queries $\levelancestorb$, $\levelnextb$, $\levelprevb$,
$\levellmostb$, $\levelrmostb$, and $\degreeb$ from $O(\log n)$
to $O(\log n/\log\log n)$. The time complexities of the other operations
are unchanged.
Additionally, by dropping support for $\degreeb$, $\childrankb$, and
$\childselectb$ queries, we obtain a data structure that handles all
queries and update operations in $O(\log n/\log\log n)$ time.

\begin{table}
\caption{Some of the tree queries supported by the data structure of
Navarro and Sadakane~\cite{NavarroS14}.
In the table below $x$ is some node of the tree.
The queries marked by * take $O(\log n)$ time in the structure of Navarro and
Sadakane, and $O(\log n/\log\log n)$ time in our structure.
The queries marked by + take $O(\log n)$ time in both structures, and
unmarked queries take $O(\log n/\log\log n)$ time in both structures.
\label{tab:queries}}
\medskip
\centering
\begin{tabular}{lp{10.5cm}}
\toprule
Query & Description \\
\midrule
$\depth{x}$ & The depth of $x$. \\
$\height{x}$ & The height of $x$. \\
$\numdescendants{x}$ & The number of descendants of $x$. \\
$\parent{x}$ & The parent of $x$. \\
$\lca{x}{y}$ & The lowest common ancestor of $x$ and $y$. \\
$\levelancestor{x}{i}^*$ & The ancestor $y$ of $x$ for which
						$\depth{y} = \depth{x}-i$. \\
$\levelnext{x}^*$ &  The node after $x$ in the BFS order\\
$\levelprev{x}^*$ &  The node before $x$ in the BFS order\\
$\levellmost{x}{d}^*$ & The leftmost node with depth $d$.\\
$\levelrmost{x}{d}^*$ & The rightmost node with depth $d$.\\
$\degree{x}^*$ & The number of children of $x$. \\
$\childrank{x}^+$ & The rank of $x$ among its siblings. \\
$\childselect{x}{i}^+$ & The $i$-th child of $x$. \\
$\firstchild{x}$ & The first child of $x$.\\
$\lastchild{x}$ & The last child of $x$.\\
$\nextsibling{x}$ & The next sibling of $x$.\\
$\prevsibling{x}$ & The previous sibling of $x$.\\
\bottomrule
\end{tabular}
\end{table}

The rest of the paper is organize as follows.
In Section~\ref{sec:partial-sums} we give a dynamic partial sums structure that
will be used later in our data structure.
In Section~\ref{sec:minmax} we give a short description of the data structure
of Navarro and Sadakane. Then, we describe our improved structure in
Sections~\ref{sec:fwd-search} and~\ref{sec:degree}.

\section{Dynamic partial sums}\label{sec:partial-sums}
\newcommand{\partialsum}[2]{\mathrm{sum}(#1,#2)}
\newcommand{\partialsearch}[2]{\mathrm{search}(#1,#2)}
\newcommand{\partialupdate}[3]{\mathrm{update}(#1,#2,#3)}
\newcommand{\partialmerge}[2]{\mathrm{merge}(#1,#2)}
\newcommand{\partialdivide}[3]{\mathrm{divide}(#1,#2,#3)}

In the \emph{dynamic partial sums problem}, the goal is to store
an array $Z$ of integers and support the following queries.
\begin{description}
\item[$\partialsum{Z}{i}$:] Return $\sum_{j=1}^i Z[i]$.
\item[$\partialsearch{Z}{d}$:] Return the minimum $i$ for which
$\partialsum{Z}{i} \geq d$.
\end{description}
Additionally, the following update operations are supported.
\begin{description}
\item[$\partialupdate{Z}{i}{\Delta}$:] Set $Z[i]\gets Z[i]+\Delta$.
\item[$\partialmerge{Z}{i}$:] Replace the entries $Z[i]$ and $Z[i+1]$ by a new
entry that is equal to $Z[i]+Z[i+1]$.
\item[$\partialdivide{Z}{i}{t}$:] Replace the entry $Z[i]$ by the entries
$t$ and $Z[i]-t$.
\end{description}
Note that a partial sums structure also supports access to $Z$ since
$Z[i] = \partialsum{Z}{i}-\partialsum{Z}{i-1}$.

\begin{lemma}\label{lem:partial-sums}
(Bille et al.~\cite{BilleCGSVV18})
There is a dynamic partial sums structure for an array $Z$ containing
$k = O(\log n/\log\log n)$ $O(\log n)$-bit non-negative integers.
The structure uses $O(k\log n)$ bits and supports all queries and update
operations in $O(1)$ time. The $\partialupdate{Z}{i}{\Delta}$ operation
is supported for values of $\Delta$ satisfying $|\Delta| = \log^{O(1)} n$.
\end{lemma}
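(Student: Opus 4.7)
The plan is to combine bit-packing with word-RAM parallelism (``broadword'' tricks) and a shallow hierarchical decomposition. Since $k\log n = O(\log^2 n/\log\log n)$, the whole array fits in $O(\log n/\log\log n)$ machine words, with each entry occupying $O(\log n)$ bits inside a packed layout. First I would represent $Z$ as a packed string so that $Z[i]$ can be extracted or overwritten by $O(1)$ shifts and masks, and I would keep the prefix sums $S[i]=\sum_{j\le i} Z[j]$ in the same packed form; each $S[i]$ still fits in $O(\log n)$ bits. Then $\partialsum{Z}{i}$ is simply one packed read out of $S$.

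For $\partialsearch{Z}{d}$, the array $S$ is sorted and contains only $k=O(\log n/\log\log n)$ values of $O(\log n)$ bits, which is below the word size. I would answer the predecessor query in $O(1)$ time using a fusion-tree--style encoding of these $k$ keys, whose sketches all fit in a single machine word, so a constant number of parallel comparisons isolates the answer.

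The main obstacle is supporting the updates in $O(1)$ time, because a naive $\partialupdate{Z}{i}{\Delta}$ modifies the whole suffix $S[i..k]$, which spans $\Theta(\log n/\log\log n)$ words, and because $\partialmerge{Z}{i}$ and $\partialdivide{Z}{i}{t}$ shift cells inside the packed layout. My approach is to replace the explicit $S$ by a shallow, constant-depth tree of fan-out $B=k^{\Theta(1)}$, where each internal node packs the subtree sums of its $B$ children into $O(1)$ machine words and each leaf stores a few entries of $Z$. Adding $\Delta$ to an aligned suffix of children of one node then reduces to a broadword addition against a precomputed mask; the promise $|\Delta|=\log^{O(1)} n$ leaves enough slack between packed lanes to prevent inter-lane carry propagation. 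An update walks through the $O(1)$ tree levels and touches $O(1)$ words at each, so the total work is $O(1)$.

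Finally, $\partialmerge{Z}{i}$ and $\partialdivide{Z}{i}{t}$ are local edits to the packed layout of the affected leaf, followed by propagating the resulting subtree-sum change upward exactly as in $\partialupdate{Z}{i}{\Delta}$; since each per-node layout is $O(1)$ words, slot insertion and deletion reduce to constant-time shift-and-mask gadgets. The overall space is $O(k)$ sums of $O(\log n)$ bits each, matching the claimed $O(k\log n)$-bit bound. The delicate technical ingredients -- the broadword arithmetic without inter-lane carries and the word-sketched predecessor structure -- are precisely those supplied by Bille et al.~\cite{BilleCGSVV18}.
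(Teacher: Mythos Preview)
The paper does not prove this lemma at all: it is stated with attribution to Bille et al.\ and used as a black box, with no accompanying argument. So there is no ``paper's own proof'' to compare against; your proposal is going beyond what the paper does.

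As a sketch of how the cited result is obtained, your outline is in the right spirit (pack the $k$ entries, use fusion-tree-style sketches for $\partialsearch{Z}{d}$, and a constant-depth hierarchy so that suffix updates touch $O(1)$ nodes per level), but a couple of steps are stated too loosely to be correct as written. First, you say the $k$ prefix sums $S[1..k]$ are ``below the word size'' and can be compared in parallel; in fact $k\cdot\Theta(\log n)=\Theta(\log^2 n/\log\log n)$ bits, which is $\Theta(\log n/\log\log n)$ words, not one. What fits in a single word are the \emph{sketches} of these keys, so you should not conflate the two. Second, for $\partialmerge{Z}{i}$ and $\partialdivide{Z}{i}{t}$ you appeal to ``constant-time shift-and-mask gadgets'' at a leaf, but an insertion or deletion of a slot changes the indices of all later entries, and in a fixed packed layout that is a shift across $\Theta(\log n/\log\log n)$ words, not $O(1)$. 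The actual construction avoids this by letting the hierarchy absorb the index shift (each node stores local sizes, so only $O(1)$ nodes per level are edited); your write-up should make that explicit rather than suggesting a global shift is cheap. These are the ``delicate technical ingredients'' you defer to the citation at the end, so the honest summary is that your proposal is a pointer to Bille et al.\ with a plausible but incomplete sketch --- which is essentially what the paper itself does, minus the sketch.
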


In the rest of this section we describe structures for storing 
an array $Z$ with negative integers. These structures support only
subsets of the operations defined above.
\begin{lemma}\label{lem:partial-sums-negative-sum}
(Dietz~\cite{Dietz89})
There is a structure for an array $Z$ containing $k = O(\log n/\log\log n)$ 
$O(\log n)$-bit integers.
The structure uses $O(k\log n)$ bits and supports the following
operations in $O(1)$ time:
(1) $\partialsum{Z}{i}$ queries.
(2) $\partialupdate{Z}{i}{\Delta}$ operations,
where $|\Delta|= \log^{O(1)} n$.
\end{lemma}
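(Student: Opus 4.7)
The plan is to apply the general word-RAM partial-sums construction of Dietz~\cite{Dietz89}, whose running time of $O(\log m/\log\log m)$ per operation on an array of $m$ elements collapses to $O(1)$ in the regime $m = k = O(\log n/\log\log n)$ of interest here. The only things that still need to be checked are (i) that the $O(k\log n)$-bit space bound is respected, and (ii) that the increment primitive tolerates updates of magnitude $|\Delta| = \log^{O(1)} n$ rather than only unit steps.

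Concretely, I would pack $Z$ contiguously into the $O(\log n/\log\log n)$ machine words available and overlay on it a $d$-ary balanced tree whose internal nodes store their subtree sums, with each cell padded by $O(\log\log n)$ guard bits so that an added $\Delta$ cannot carry into its packed neighbour. Choosing $d = \Theta(k)$ makes the tree of height $1$, so that each level's data occupies a single logical block of $O(\log^2 n/\log\log n)$ bits, which is simulated by $O(\log n/\log\log n)$ machine words via SIMD-within-a-register primitives. A $\partialsum{Z}{i}$ query is then a single word-level parallel prefix sum over the root's children; a $\partialupdate{Z}{i}{\Delta}$ operation broadcasts $\Delta$ into the relevant packed suffix of child-sums and adds it in, updating the affected leaf and its $O(1)$ ancestors.

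The technical heart of the construction, and the step I expect to require the most care, is the word-parallel prefix-sum primitive: computing the sum of any prefix of $d$ packed $O(\log n)$-bit cells in $O(1)$ time requires a mix of bit-parallel arithmetic and precomputed lookup tables of size $\mathrm{poly}(n)$, built once at initialization and shared across all operations. Once that primitive is in place the $O(1)$ bounds follow directly, and the clause $|\Delta| = \log^{O(1)} n$ is accommodated because the $O(\log\log n)$-bit guard region per cell is wider than the $O(\log\log n)$-bit representation of $\Delta$, so the at most one carry bit produced by adding $\Delta$ stays strictly inside its own cell.
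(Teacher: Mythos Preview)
The paper does not prove this lemma; it is quoted as a result of Dietz and used as a black box, so there is no proof in the paper to compare against.

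Your sketch, however, has a genuine gap. In the second paragraph you pack the $k$ cells of $\Theta(\log n)$ bits each into what you yourself compute as $\Theta(\log n/\log\log n)$ machine words, and then claim that a prefix sum over these cells is ``a single word-level parallel prefix sum'' and that an update ``broadcasts~$\Delta$ into the relevant packed suffix \ldots\ and adds it in''. But a parallel prefix or a suffix increment spanning $\omega(1)$ words is not an $O(1)$-time operation, and a lookup table indexed by the entire block would need $2^{\Theta(\log^2 n/\log\log n)}$ entries, which is not $\mathrm{poly}(n)$. Guard bits stop carries between packed cells but do nothing about the fact that $\omega(1)$ words must be read or written. (Relatedly, the opening claim that $O(\log m/\log\log m)$ collapses to $O(1)$ when $m=k=O(\log n/\log\log n)$ is literally false: that quantity is $\Theta(\log\log n/\log\log\log n)$.)

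The idea you are missing is exactly Dietz's trick: split each entry into a large ``old'' part, whose prefix sums are stored explicitly (so a $\mathrm{sum}$ query reads one cell), and a small ``recent'' part holding the accumulated updates since the last rebuild. After at most $k$ updates each recent part is bounded in magnitude by $k\cdot\log^{O(1)}n=\log^{O(1)}n$ and so fits in $O(\log\log n)$ bits; the whole recent array then occupies $k\cdot O(\log\log n)=O(\log n)$ bits, i.e.\ a single machine word. On that one word, prefix sums and suffix increments really are $O(1)$ via a single $\mathrm{poly}(n)$-size table lookup or word operation. Every $k$ updates one folds the recent parts into the old prefix sums in $O(k)$ time, which amortizes to $O(1)$ per update and can be deamortized by incremental rebuilding. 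Without this two-level decomposition the $O(1)$ bound does not follow from your construction.
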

\begin{corollary}\label{cor:partial-sums-negative-sum}
There is a structure for an array $Y$ containing $k = O(\log n/\log\log n)$ 
$O(\log n)$-bit integers.
The structure uses $O(k\log n)$ bits and supports the following
operations in $O(1)$ time:
(1) Access $Y[i]$.
(2) Add $\Delta$ to the entries of $\substr{Y}{i}{k}$,
where $|\Delta|= \log^{O(1)} n$.
\end{corollary}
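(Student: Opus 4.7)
The plan is to reduce the corollary to Lemma~\ref{lem:partial-sums-negative-sum} via the standard difference-array trick. Specifically, I would maintain an auxiliary array $Z$ of length $k$ defined by $Z[1] = Y[1]$ and $Z[i] = Y[i] - Y[i-1]$ for $i \geq 2$, and store $Z$ using the structure of Lemma~\ref{lem:partial-sums-negative-sum}. Since each $Y[i]$ is $O(\log n)$ bits, the differences $Z[i]$ also fit in $O(\log n)$ bits (they grow by at most one bit), so the precondition of Lemma~\ref{lem:partial-sums-negative-sum} is met, and the whole structure uses $O(k \log n)$ bits.

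Access to $Y[i]$ is implemented by observing that $Y[i] = \sum_{j=1}^{i} Z[j] = \partialsum{Z}{i}$, which the lemma answers in $O(1)$ time. The suffix-update operation (adding $\Delta$ to every entry of $\substr{Y}{i}{k}$) is implemented by the single point update $\partialupdate{Z}{i}{\Delta}$: this increases $Y[j]$ by $\Delta$ for every $j \geq i$ and leaves $Y[j]$ unchanged for $j < i$, because the only prefix sum of $Z$ affected by changing $Z[i]$ is one that reaches index $i$ or beyond. The bound $|\Delta| = \log^{O(1)} n$ carries over directly from the lemma's update precondition.

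There is essentially no obstacle here; the argument is a routine duality between point updates on $Z$ and range updates on its prefix-sum array $Y$, combined with accesses via prefix sums. The only minor thing worth checking is that the bit-width of entries of $Z$ remains $O(\log n)$ under a sequence of updates, which is immediate since a suffix update by $\Delta$ modifies a single entry of $Z$ by $\Delta$, and any $Y[i]$ that the structure is expected to represent is assumed to be $O(\log n)$ bits.
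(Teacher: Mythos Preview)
Your proposal is correct and follows exactly the same approach as the paper: define the difference array $Z[i]=Y[i]-Y[i-1]$ and apply Lemma~\ref{lem:partial-sums-negative-sum} to it. The paper's proof is just that one sentence; your additional explanation of how access and suffix updates translate to $\partialsum{Z}{i}$ and $\partialupdate{Z}{i}{\Delta}$ is a faithful elaboration of the intended argument.
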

\begin{proof}
Define an array $\substr{Z}{1}{k}$ in which $Z[i] = Y[i]-Y[i-1]$ and store
the structure of Lemma~\ref{lem:partial-sums-negative-sum} on $Z$.
\end{proof}

\begin{lemma}\label{lem:partial-sums-negative}
There is a structure for an array $Z$ containing $k = O(\log n/\log\log n)$ 
$O(\log n)$-bit integers.
The structure uses $O(k\log n)$ bits and supports the following
operations in $O(1)$ time:
(1) $\partialsum{Z}{i}$ queries.
(2) $\partialsearch{Z}{d}$ queries, where $d > 0$.
(3) $\partialupdate{Z}{i}{\Delta}$ operations, where $\Delta \in \{-1,1\}$.
\end{lemma}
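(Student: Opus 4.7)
The plan is to reduce all three operations to constant-time calls on the structures of Corollary~\ref{cor:partial-sums-negative-sum} and Lemma~\ref{lem:partial-sums}, by maintaining two arrays alongside $Z$. First, I store the prefix sums $Y[i] = \partialsum{Z}{i}$ via Corollary~\ref{cor:partial-sums-negative-sum}: access to $Y[i]$ answers $\partialsum{Z}{i}$, and adding $\pm 1$ to the suffix $\substr{Y}{j}{k}$ implements $\partialupdate{Z}{j}{\pm 1}$, both in $O(1)$. Second, I store the non-decreasing running maximum $M[i] = \max_{\ell \leq i} Y[\ell]$ via its non-negative difference sequence $\Delta M[i] = M[i] - M[i-1]$ using Lemma~\ref{lem:partial-sums}. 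For $d > 0$, the identity $\partialsearch{Z}{d} = \min\{i : Y[i] \geq d\} = \min\{i : M[i] \geq d\} = \partialsearch{\Delta M}{d}$ reduces $\partialsearch$ to an $O(1)$ query on $\Delta M$.

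The remaining task is keeping $\Delta M$ correct after each $\pm 1$ update to $Z[j]$. A case analysis based on the running maximum $R[i] = \max_{j \leq \ell \leq i} Y[\ell]$ of the updated suffix shows that $M$ changes by the same $\pm 1$ on exactly one suffix $[i^*, k]$ and is unchanged elsewhere, so only the single entry $\Delta M[i^*]$ shifts by $\pm 1$. A quick check gives $\Delta M[i^*] \geq 1$ before any decrement there, so $\Delta M$ remains non-negative as required by Lemma~\ref{lem:partial-sums}. For a $-1$ update, the threshold satisfies $i^* = \partialsearch{\Delta M}{M[j-1]+1}$, since on $[j, k]$ the sequence $M$ strictly exceeds $M[j-1]$ exactly where $Y$ does.

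The main obstacle is the $+1$ case, where the correct threshold $i^* = \min\{i \geq j : Y[i] \geq M[j-1]\}$ may satisfy $Y[i^*] = M[j-1]$ exactly, placing $i^*$ inside a flat plateau of $M$ and thus invisible to $\partialsearch{\Delta M}$. To recover $i^*$ I would maintain an auxiliary $k$-bit vector $B$ with $B[i] = 1$ iff $Y[i] = M[i]$; since $k = O(\log n/\log\log n)$, $B$ fits in a single machine word, so its successor within $[j, k]$ is computable by one bit-parallel operation. The correct $i^*$ is then $\min(\partialsearch{\Delta M}{M[j-1]+1}, \text{successor of } j-1 \text{ in } B)$. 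Maintaining $B$ through each update only flips bits at positions whose $Y$-value equals $M[j-1]$ or $M[j-1] - 1$ inside the affected range; locating those positions in $O(1)$ can be done with a constant number of precomputed lookup tables of size $n^{o(1)}$ indexed by a packed digest of the modified suffix of $Z$.
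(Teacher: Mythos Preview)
Your high-level reduction is the same as the paper's: keep the prefix sums $Y$, keep the running maxima $M$ via their non-negative first differences, and answer $\partialsearch{Z}{d}$ as a search on the latter. Your $B$ plays the role of the paper's bit string $I$ (positions where the running maximum is attained), and your case analysis for which single entry of $\Delta M$ moves is correct.

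The gap is in the last paragraph. For the $+1$ update you must flip to $1$ exactly those bits $i\in[j,i^*-1]$ with $Y[i]=M[j-1]-1$. These positions are \emph{not} encoded in $B$ (which only tells you where $Y[i]=M[i]$), nor in $\Delta M$, nor in any structure you have listed. Your proposed fix, a lookup table ``indexed by a packed digest of the modified suffix of $Z$'', cannot work as stated: the relevant suffix of $Z$ (equivalently, of $Y$) occupies $\Theta(k\log n)=\Theta(\log^2 n/\log\log n)$ bits, far too many to index an $n^{o(1)}$-size table, and no smaller digest that determines the answer is specified. Without a concrete $o(\log n)$-bit object from which the set $\{\,i:\ Y[i]=M[j-1]-1\,\}$ can be read off, this step is not $O(1)$.

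The paper closes exactly this gap by maintaining, in addition to $I$, the array $D[i]=M[i]-Y[i]$ (the ``distance to the running max''). For the $+1$ update one only needs the \emph{first} position $l\ge j$ with $D[l]=0$; that single position is where $I$ (your $B$) must be updated and where $\Delta M$ changes. Since $D$ itself has $O(\log n)$-bit entries, the paper stores instead a capped version $\hat D$ with $\hat D[i]=\min(k,D[i])$, which occupies $k\lceil\log(k+1)\rceil=o(\log n)$ bits and hence supports the required ``first zero in a range'' and ``add $\pm 1$ on a range'' via table lookup. A rotating pointer $\alpha$ periodically refreshes one entry of $\hat D$ from the exact value so that the invariant $\hat D[i]=0\iff D[i]=0$ is preserved despite the capping. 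Adding this capped-gap array (and the refresh trick) to your construction would complete your proof along the paper's lines.
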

\begin{proof}
Define $\substr{Y}{0}{k+1}$ to be an array in which
$Y[0] = 0$, $Y[k+1] = \infty$, and $Y[i] = \partialsum{Z}{i}$ for
$1\leq i \leq k$.
Let $\substr{I}{0}{k+1}$ be a binary string in which $I[0] = 1$,
and for $i \geq 1$, $I[i] = 1$ if $\max(\substr{Y}{0}{i-1}) < Y[i]$.
Let $Y'$ be an array containing the entries $Y[i]$ for all indices $i\neq 0,k+1$
for which $I[i] = 1$,
and let $Z'$ be an array of size $|Y'|$ in which $Z'[i] = Y'[i]-Y'[i-1]$
($Z'[1]=Y'[1]$). Note that by definition, $Z'[i] \geq 1$ for all $i$.
Our data structure consists of the following structures.
\begin{itemize}
\item The structure of Lemma~\ref{lem:partial-sums-negative-sum} on $Z$.
\item The string $I$.
\item The structure of Lemma~\ref{lem:partial-sums} on $Z'$.
\item An array $\substr{D}{1}{k}$ in which $D[i]=Y[\prev{1}{I}{i}]-Y[i]$,
where $\prev{1}{I}{i}$ is the maximum index $i' \leq i$ such that $I[i']=1$.
\end{itemize}
See Table~\ref{tab:example} for an example.
\begin{table}
\footnotesize
\begin{tabular}{|c|c|c|c|c|c|c|c|c|c|}
\hline
$Z$ & 2 & -2 & -1 & 3 & -1 & 1 & 1 & -3 & 5\\
\hline
$Y$ & 2 &  0 & -1 & 2 & 1  & 2 & 3 & 0  & 5\\
\hline
$I$ & 1 &  0 & 0  & 0 & 0  & 0 & 1 & 0  & 1\\
\hline
$D$ & 0 &  2 & 3  & 0 & 1  & 0 & 0 & 3  & 0\\
\hline
$Z'$& 2 & 1 & 2 & \multicolumn{6}{c}{}\\
\cline{1-4}
$Y'$& 2 & 3 & 5 & \multicolumn{6}{c}{}\\
\cline{1-4}
\end{tabular}
\hfill
\begin{tabular}{|c|c|c|c|c|c|c|c|c|c|}
\hline
$Z$ & 2 & \textbf{-1} & -1 & 3 & -1 & 1 & 1 & -3 & 5\\
\hline
\textbf{$Y$} & -1 & \textbf{1} & \textbf{0} & \textbf{3} & \textbf{2} & \textbf{3} & \textbf{4} & \textbf{1} & \textbf{6}\\
\hline
$I$ & 1 & 0 & 0 & \textbf{1} & 0 & 0 & 1 & 0 & 1\\
\hline
$D$ & 0 & \textbf{1} & \textbf{2} & 0 & 1 & 0 & 0 & 3 & 0\\
\hline
$Z'$ & 2 & \textbf{1} & 1 & 2 & \multicolumn{5}{c}{}\\
\cline{1-5}
$Y'$ & 2 & 3 & \textbf{4} & \textbf{6} & \multicolumn{5}{c}{}\\
\cline{1-5}
\end{tabular}
\caption{An example showing the arrays of the data structure of
Lemma~\ref{lem:partial-sums-negative-sum}.
The left table gives the values of the array $Z$ and the corresponding
arrays $Y$, $I$, $D$, $Z'$ and $Y'$ (the arrays $Y$ and $I$ are shown without
entries $0$ and $k+1$).
The table on the right shows the array $Z$ and the corresponding arrays
after an $\partialupdate{Z}{2}{1}$
operation.
Changed entries appear in bold.
\label{tab:example}}
\end{table}

To answer a $\partialsearch{Z}{d}$ query,
compute $i = \partialsearch{Z'}{d}$ and return $\select{1}{I}{i}$.
The computation of $\select{1}{I}{i}$ is done in $O(1)$ time
using a lookup table. Therefore, the query is handled in $O(1)$ time.

We next describe how to handle an $\partialupdate{Z}{i}{\Delta}$ operation
(recall that $\Delta \in \{-1,1\}$).
\begin{enumerate}
\item Perform an $\partialupdate{Z}{i}{\Delta}$ operation on the structure of
 Lemma~\ref{lem:partial-sums-negative-sum}.
\item $j\gets \next{1}{I}{i}$ (namely, $j\geq i$ is the minimum index such that
$I[j]=1$).
\item $i'\gets \rank{1}{I}{j}$.
\item $\partialupdate{Z'}{i'}{\Delta}$.
\item If $i<j$ and $\Delta = 1$:
\begin{enumerate}
\item Let $l$ be the minimum index such that $D[l] = 0$.
If no such index exists, $l=k+1$.
\label{alg:l}
\item Add $-1$ to the entries of $\substr{D}{i}{l-1}$.\label{alg:D-1}
\item If $l\neq j$, set $I[l]\gets 1$ and perform
$\partialdivide{Z'}{i'}{1}$.
\end{enumerate}
\item If $i<j$ and $\Delta = -1$:
\begin{enumerate}
\item Add $1$ to the entries of $\substr{D}{i}{j-1}$.\label{alg:D+1}
\item If $Z'[i'] = 0$, set $I[j]\gets 0$ and perform $\partialmerge{Z'}{i'}$.
\end{enumerate}
\end{enumerate}

We now show the correctness of the above algorithm.
We will only prove correctness for the case $\Delta = 1$. The proof
for $\Delta = -1$ is similar.

Consider an $\partialupdate{Z}{i}{1}$ operation.
The update operation causes the entries of $\substr{Y}{i}{k}$ to increase by
$1$.
Recall that for an index $p$, $I[p] = 1$ if $\max(\substr{Y}{0}{p-1}) < Y[p]$.
By definition, $\max(\substr{Y}{0}{p-1}) = Y[\prev{1}{I}{p-1}]$.
If $p > j$ then $\prev{1}{I}{p-1}\geq j \geq i$.
Therefore, the update operation causes both $\max(\substr{Y}{0}{p-1})$
and $Y[p]$ to increase by $1$. Therefore, the condition
$\max(\substr{Y}{0}{p-1}) < Y[p]$ is satisfied after the update if and only
if it was satisfied before the update. In other words, the value of
$I[p]$ does not change due to the update operation.
For $p < i$, both $\max(\substr{Y}{0}{p-1})$ and $Y[p]$ do not change,
and thus $I[p]$ does not change.
For the index $p = j$, $I[j]=1$,
and thus $\max(\substr{Y}{0}{j-1}) < Y[j]$ before the update.
The update increases $Y[j]$ by $1$, and either increases by $1$ or does not
change $\max(\substr{Y}{0}{j-1})$. Therefore, $\max(\substr{Y}{0}{j-1}) < Y[j]$
after the update, so $I[j]$ does not change.
If $i=j$ we have shown that $I[p]$ does not change for every index $p$.
Therefore, the algorithm correctly updates the array $I$ in this case.

Suppose now that $i<j$.
For $p\in[i,l-1]$, $\max(\substr{Y}{0}{p-1}) = Y[\prev{1}{I}{p-1}] > Y[p]$
before the update.
Since $\prev{1}{I}{p-1} < i$, we have that $\max(\substr{Y}{0}{p-1})$ does
not change and $Y[p]$ increases by one.
Therefore, $\max(\substr{Y}{0}{p-1}) \geq Y[p]$ after the update.
It follows that $I[p]$ does not change.
Due to the same arguments, $\max(\substr{Y}{0}{l-1}) = Y[l]$ before the
update and $\max(\substr{Y}{0}{l-1}) < Y[l]$ after the update.
Thus, $I[l]$ changes from $0$ to $1$.
Finally, for $p\in[l+1,j-1]$, $\max(\substr{Y}{0}{p-1}) \geq Y[p]$ before
the update.
The update increases both $\max(\substr{Y}{0}{p-1})$ and $Y[p]$ by one
Therefore, $I[p]$ does not change.
We obtained again that the algorithm updates $I$ correctly.
It is easy to verify that the algorithm also updates $D$ correctly.


The above algorithm takes $O(k)$ time due to lines~\ref{alg:l}, \ref{alg:D-1},
and~\ref{alg:D+1}.
To reduce the time to $O(1)$ we use the following approach from
Navarro and Sadakane. 
Instead of storing $D$, the data structure stores an array $\hat{D}$ that has
the following properties:
(1) $\hat{D}[i] = 0$ if and only if $D[i] = 0$.
(2) $0 \leq \hat{D}[i] \leq k$ for all $i$.
Due to the first property, we can use $\hat{D}$ instead of $D$ in
line~\ref{alg:l} above.
Moreover, due to the second property, the space for storing $D$ is
$k\lceil\log (k+1)\rceil = O(\log n)$ bits.
Thus, line~\ref{alg:l} can be performed in $O(1)$ time using a lookup table.

The array $\hat{D}$ is updated as follows.
The structure keeps an index $\alpha$.
If $\Delta = -1$, instead of line~\ref{alg:D+1} above,
first perform $\hat{D}[p] \gets \min(k,\hat{D}[p]+1)$ for all
$i \leq p \leq j-1$.
This takes $O(1)$ time using a lookup table.
Additionally, set $\hat{D}[\alpha] \gets
\partialsum{Z}{\prev{1}{I}{\alpha}}-\partialsum{Z}{\alpha}$
(so $\hat{D}[\alpha]=D[\alpha]$ after this step).
Finally, update $\alpha$ by $\alpha\gets \alpha+1$ if $\alpha < k$ and
$\alpha \gets 1$ otherwise.
Handling the case $\Delta = 1$ is similar.
It is easy to verify that $\hat{D}$ satisfies the two properties above.
\end{proof}

\section{The min-max tree}\label{sec:minmax}

In this section we describe the data structure of
Navarro and Sadakane~\cite{NavarroS14} for dynamic trees.
Let $T$ be an ordinal tree.
The \emph{balanced parentheses string} of $T$ is a string $P$
obtained by performing a DFS traversal on $T$.
When reaching a node for the first time an opening parenthesis
is appended to $P$, and when the traversal leaves a node, a closing parenthesis
is appended to $P$.
We will assume $P$ is a binary string, where the character $1$ encodes an
opening parenthesis and $0$ encodes a closing parenthesis.
We also assume that a node $x$ in $T$ is represented by the index of its
opening parenthesis in $P$.
For example, consider a tree $T$ with $3$ nodes in which the root has $2$
children. The balanced parenthesis string of $T$ is $P=110100$,
and the second child of the root is represented by the index $4$.

For a binary string $P$ and a function $f\colon\{0,1\}\to \{-1,0,1\}$,
the following queries are called \emph{base queries}.
\begin{align*}
\sumf{P}{f}{i}{j} & =\sum_{k=i}^{j}f(P[k])\\
\fwdsearchfx{P}{f}{i}{d} & =\min\{j\geq i:\sumf{P}{f}{i}{j} = d\}\\
\bwdsearchfx{P}{f}{i}{d} & =\max\{j\leq i:\sumf{P}{f}{j}{i} = d\}\\
\rmqf{P}{f}{i}{j} & =\min\{\sumf{P}{f}{1}{k}:i\leq k\leq j\}\\
\rmqif{P}{f}{i}{j} & = \min\{i \leq k \leq j:
	\sumf{P}{f}{1}{k} = \rmqf{P}{f}{i}{j} \} \\
\mincountf{P}{f}{i}{j} & = |\{i\leq k\leq j:
	\sumf{P}{f}{1}{k} = \rmqf{P}{f}{i}{j} \}| \\
\minselectf{P}{f}{i}{j}{d} & = \text{The $d$-th smallest element of }\\
& \phantom{=}\{i\leq k\leq j: \sumf{P}{f}{1}{k} = \rmqf{P}{f}{i}{j} \} \\
\RMQf{P}{f}{i}{j} & =\max\{\sumf{P}{f}{1}{k}:i\leq k\leq j\}\\
\RMQif{P}{f}{i}{j} & = \min\{i \leq k \leq j:
	\sumf{P}{f}{1}{k} = \RMQf{P}{f}{i}{j} \}
\end{align*}
Navarro and Sadakane showed that in order to support queries on the tree $T$,
it suffices to support the following base queries, where $P$ is the
balanced parentheses string of $T$.
\begin{itemize}
\item All base queries on a function $\pi$ defined by
$\pi(1) = 1$ and $\pi(0) = -1$.
\item $\sumb$ and $\fwdsearchxb$ queries on a function $\phi$ defined by
$\phi(1) = 1$ and $\phi(0) = 0$.
\item $\sumb$ and $\fwdsearchxb$ queries on a function $\psi$ defined by
$\psi(1) = 0$ and $\psi(0) = 1$.
\end{itemize}

For example, $\levelancestor{x}{d}=\bwdsearchfx{P}{\pi}{x}{d+1}$.
As noted in Tsur~\cite{Tsur_labeled},
the base queries $\fwdsearchxb$ and $\bwdsearchxb$
can be replaced by the following queries:
\begin{align*}
\fwdsearchf{P}{f}{i}{d} & =\min\{j\geq i:\sumf{P}{f}{i}{j} \geq d\}\\
\bwdsearchf{P}{f}{i}{d} & =\max\{j\leq i:\sumf{P}{f}{j}{i} \geq d\}
\end{align*}
We now need to support the base query $\bwdsearchb$ on the functions $\pi$ and
$\pi' = -\pi$ (namely, $\pi'(1) = -1$ and $\pi'(0) = 1$)
and the base query $\fwdsearchb$ on the functions $\pi$, $\pi'$, $\phi$,
and $\psi$.

To support the base queries, it is convenient to use an equivalent formulation
of these queries.
For an array of integers $A$, let
\[ \fwdsearch{A}{i}{d} = \min\{j \geq i: A[j] \geq d\}. \]
For a binary string $P$, let $f(P)$ be an array of length $|P|$,
where $f(P)[i] = \sumf{P}{f}{1}{i}$.
Then,
\[ \fwdsearchf{P}{f}{i}{d} = \fwdsearch{f(P)}{i}{d+f(P)[i-1]}\]
The other base queries on $f$ can also be rephrased accordingly.


In order to support the base queries, the string $P$ is partitioned
into blocks of sizes $\Theta(\log^2 n/\log \log n)$.
The blocks are kept in a B-tree, called a \emph{min-max tree},
where each leaf stores one block. 
The degrees of the internal nodes of the min-max tree are
$\Theta(\sqrt{\log n})$,
and therefore the height of the tree is $\Theta(\log n/\log\log n)$.
Each internal node stores \emph{local structures} that are used for
answering the base queries.
A base query is handled by going down from the root of the min-max tree to one
or two leaves of the tree, while performing queries on the local structures
of the internal nodes that are traversed.

The tree queries $\levelancestorb$, $\levelnextb$, $\levelprevb$,
$\levellmostb$, and $\levelrmostb$ are handled by performing a
$\fwdsearch{f(P)}{i}{d}$ or a $\bwdsearch{f(P)}{i}{d}$ query.
These queries take $O(\log n)$ time in the data structure of Navarro and
Sadakane.
In Section~\ref{sec:fwd-search} we will show how to reduce the time of
$\fwdsearchb$ queries to $O(\log n/\log\log n)$
(the handling of $\bwdsearchb$ queries is similar and thus omitted).
In Section~\ref{sec:degree} we will show how to support $\degreeb$ queries
in $O(\log n/\log\log n)$ time.

\section{fwd\_search queries}\label{sec:fwd-search}

In this section we describe how to support $\fwdsearch{f(P)}{i}{d}$ queries
in $O(\log n/\log\log n)$ time.
We first describe how these queries are handled in $O(\log n)$ time in the
structure of Navarro and Sadakane.
For each node $v$ in the min-max tree, let $P_v$ be the substring of $P$
obtained by concatenating the blocks of the descendant leaves of $v$.
Suppose $v$ is an internal node of the min-max tree and the children of $v$
are $v_1,\ldots,v_k$.
We partition $f(P_v)$ into blocks $f(P_v)_1,\ldots,f(P_v)_k$ where the
size of $i$-th block is $|P_{v_i}|$.
Note that $f(P_v)_t[i] = f(P_{v_t})[i]+\delta_t$ for all $i$, where
$\delta_t$ is the last element of $f(P_v)_{t-1}$.

In data structure of Navarro and Sadakane, each internal node $v$
of the min-max tree stores the following local structures.
\begin{itemize}
\item The structure of Lemma~\ref{lem:partial-sums} on an
array $\substr{S_v}{1}{k}$ in which $S_v[i]$ is the size of $P_{v_i}$.
\item A structure supporting $\fwdsearchb$ queries on an array
$\substr{\arraymax{v}}{1}{k}$ in which $\arraymax{v}[i] = \max(f(P_v)_i)$.
\item The structure of Corollary~\ref{cor:partial-sums-negative-sum} on
an array $\substr{\arraylast{v}}{1}{k}$ in which
$\arraylast{v}[i]$ is the last entry of $f(P_v)_{i-1}$.
\end{itemize}
We now give a recursive procedure $\cfwdsearch{v}{i}{d}$ that computes
$\fwdsearch{f(P_v)}{i}{d}$.
\begin{enumerate}
\item If $v$ is a leaf in the min-max tree, compute the answer using a lookup
table and return it.
\item If $i = 1$
\begin{enumerate}
\item $t \gets 0$.
\end{enumerate}
else
\begin{enumerate}
\setcounter{enumii}{1}
\item $t \gets \partialsearch{S_v}{i}$.
\item $j'\gets
 \cfwdsearch{v_t}{i-\partialsum{S_v}{t-1}}{d-\arraylast{v}[t]}$.
\label{alg:fwdsearch1}
\item If $j'\neq \infty$, return $j'+ \partialsum{S_v}{t-1}$.
\end{enumerate}
\item $t' \gets \fwdsearch{\arraymax{v}}{t+1}{d}$.
\label{alg:fwdsearch2}
\item If $t' = \infty$ return $\infty$.
\label{alg:if}
\item Return
$\cfwdsearch{v_{t'}}{1}{d-\arraylast{v}[t']} + \partialsum{S_v}{t'-1}$.
\label{alg:fwdsearch3}
\label{alg:FLV}
\end{enumerate}

The time of step~1 is $O(\log n/\log\log n)$.
Navarro and Sadakane showed that $\fwdsearchb$ queries on
$\arraymax{v}$ can be handled in $O(\log k) = O(\log\log n)$ time.
During the computation of $\fwdsearch{f(P)}{i}{d}$, the procedure $\cfwdsearchb$
is called on $O(\log n/\log \log n)$ nodes of the min-max tree.
Therefore, the time for a $\fwdsearch{f(P)}{i}{d}$ query is $O(\log n)$.

When a single character is inserted or deleted from $P$,
the local structures of $O(\log n/\log\log n)$ nodes in the min-max tree
are updated:
If $v$ is the leaf whose block contains the inserted or deleted character,
only the local structures of the ancestors of $v$ are updated, assuming no split
or merge operations were used to rebalance the min-max tree.
The cost of splitting or merging min-max nodes can be ignored if an appropriate
B-tree balancing algorithm is used (Navarro and Sadakane used the
balancing algorithm of Fleischer~\cite{Fleischer96}, but other balancing
algorithms can be used, e.g.\ the algorithm of Willard~\cite{Willard00}).
Each update takes $O(1)$ time, and therefore updating all local
structures for a single character update on $P$ takes $O(\log n/\log\log n)$
time.
An insertion or deletion of a node from $T$ consists of insertion or deletion
of two characters from $P$. Therefore, the time to update the local structures
is $O(\log n/\log\log n)$.

We now describe how to support $\fwdsearchb$ queries in $O(\log n/\log\log n)$
time.
In addition to the local structures described above, we also store the following
local structures in each internal node $v$ of the min-max tree.
\begin{itemize}
\item An RMQ structure on $\arraymax{v}$. Like in the structure of 
Navarro and Sadakane,
this RMQ structure consists of the balanced parentheses string of the
max-Cartesian tree of $\arraymax{v}$.
\item The structure of Lemma~\ref{lem:partial-sums-negative}
on an array $\substr{\arraymaxdiff{v}}{1}{k}$ in which
$\arraymaxdiff{v}[i] = \arraymax{v}[i]-\arraymax{v}[i-1]$.
\end{itemize}

The procedure $\cfwdsearchb$ is changed by replacing lines~\ref{alg:fwdsearch2}
and~\ref{alg:if} with the following equivalent lines:
\begin{enumerate}
\setcounter{enumi}{4}
\item If $\RMQ{\arraymax{v}}{t+1}{k} < d$ return $\infty$.\label{alg:RMQ}
\item If $i = 1$ then
\begin{enumerate}
\item  $t'\gets \partialsearch{\arraymaxdiff{v}}{d}$.\label{alg:partial-search}
\end{enumerate}
else
\begin{enumerate}
\setcounter{enumii}{1}
\item $t' \gets \fwdsearch{\arraymax{v}}{t+1}{d}$.\label{alg:fwdsearch}
\end{enumerate}
\end{enumerate}

Consider the computation of $i^*=\fwdsearch{f(P)}{i}{d}$ using 
procedure $\cfwdsearchb$.
Let $w$ (resp., $w^*$) be the leaf in the min-max tree whose block contains
$P[i]$ (resp., $P[i^*]$).
Let $u^*_1,u^*_2,\ldots,u^*_h = w^*$ be the nodes on the path from the root of
the min-max tree to $w^*$, and let $u^*_s$ be the lowest common ancestor of $w$
and $w^*$.
Let $u_s = u^*_s, u_{s+1}, \ldots, u_h = w$ be the nodes on the path
from $u^*_s$ to $w$.
The computation of $\fwdsearch{f(P)}{i}{d}$ makes the following
calls to $\cfwdsearchb$. First, the procedure is called on
$u^*_1,u^*_2,\ldots,u^*_s$.
Then, the procedure is called on $u_{s+1},\ldots,u_h$.
Finally, the procedure is called on $u^*_{s+1},\ldots,u^*_h$.
Note that line~\ref{alg:fwdsearch} is executed only when the procedure is called
on $u^*_s$.
Therefore, line~\ref{alg:fwdsearch} contributes $O(\log k)=O(\log\log n)$ time
to the total time of the computation.
The rest of the recursive calls, except the two calls on $u_h$ and $u^*_h$,
take $O(1)$ time each.
Therefore, the total time is $O(\log n/\log\log n)$.

Recall that the structure of Lemma~\ref{lem:partial-sums-negative} supports
$\partialsearch{\cdot}{d}$ queries only for $d> 0$.
We therefore need to show that $d$ is non-negative in
line~\ref{alg:partial-search}.
Note that this line is executed only when procedure $\cfwdsearchb$ is called
on $u^*_{s+1},\ldots,u^*_{h-1}$.
First, in every tree query that is answered by an
$\fwdsearchf{P}{f}{i}{d}$ query, the parameter $d$ is non-negative.
The $\fwdsearchf{P}{f}{i}{d}$ query is handled by answering an
$\fwdsearch{f(P)}{i}{d'}$ query, where $d'= d + f(P)[i-1] =
 d + \sumf{P}{f}{1}{i-1}$.

Let $l_j,r_j$ be the indices such that $P_{u^*_j} = \substr{P}{l_j}{r_j}$
(note that $i^*\in[l_j,r_j]$ for all $j$).
When procedure $\cfwdsearchb$ is called on $u^*_j$,
the value of $d'$ is decreased by $\sumf{P}{f}{l_{j-1}}{l_j-1}$.
It follows that when the procedure is called on $u^*_j$ for $j>s$,
the value of $d'$ is $d' = d - \sumf{P}{f}{i}{l_{j-1}-1}$.
Therefore, $d' > 0$ otherwise $\fwdsearchf{P}{f}{i}{d} \leq l_{j-1}-1 < i^*$
which contradicts the definition of $i^*$.

\paragraph{Updating the structures}
We now show how to update the additional local structures
when a character is inserted or deleted from $P$.
Navarro and Sadakane showed that the RMQ structure on $\arraymax{v}$
can be updated in $O(1)$ time using a lookup table.
A single character update on $P$ either does not change $\arraymax{v}$,
increases the entries of $\substr{\arraymax{v}}{i}{k}$ by $1$ for some $i$, or
decreases the entries of $\substr{\arraymax{v}}{i}{k}$ by $1$ for some $i$.
Therefore, either the array $\arraymaxdiff{v}$ does not change,
or a single entry of $\arraymaxdiff{v}$ is either increased by $1$ or decreased
by $1$.
Thus, an $\partialupdate{\arraymaxdiff{v}}{i}{\pm1}$ operation updates the
structure on $\arraymaxdiff{v}$ in $O(1)$ time.

\section{degree queries}\label{sec:degree}
In this section we show how to handle $\degree{x}$ queries in
$O(\log n/\log\log n)$ time.
We first describe the handling of these queries in the structure of
Navarro and Sadakane.

To compute $\degree{x}$, we use the equality
$\degree{x} = \mincountf{P}{\pi}{x+1}{\linebreak[0]\enclose{x}-1}$,
where $\enclose{x}$ is the index in $P$ of the closing parenthesis of $x$.
As in Section~\ref{sec:minmax}, we will use an equivalent formulation
of $\mincountb$.
For arrays of integers $A$ and $B$ define
\begin{align*}
\mincount{A}{i}{j} & = | \{i \leq k \leq j: A[k] = \min(\substr{A}{i}{j}) \}\\
\minsum{A}{B}{i}{j} & = \sum_{i \leq k\leq j: A[k] = \min(\substr{A}{i}{j})}
B[k]
\end{align*}
We have that $\mincountf{P}{\pi}{i}{j} = \mincount{\pi(P)}{i}{j}$.
In the following we show a structure for computing $\mincount{\pi(P)}{i}{j}$.

Consider some internal node $v$ in the min-max tree, and let $k$ be the number
of children of $v$.
Recall that the string $\pi(P_v)$ is partitioned into $k$ blocks
$\pi(P_v)_1,\ldots,\pi(P_v)_k$.
The structure of Navarro and Sadakane stores in $v$ the following local
structures.
\begin{itemize}
\item The structure of Corollary~\ref{cor:partial-sums-negative-sum} on
an array $\substr{\arraymin{v}}{1}{k}$ in which
$\arraymin{v}[i] = \min(\pi(P_v)_i)$.
\item An rmq structure on $\arraymin{v}$ (as before, this structure
consists of the balanced parentheses string of the min-Cartesian tree of
$\arraymin{v}$).
\item An array $\substr{\arraymincount{v}}{1}{k}$ in which
$\arraymincount{v}[i] = \mincount{\pi(P_v)_i}{1}{S_v[i]}$.
\item A structure for answering
$\minsumb$ queries on $\arraymin{v},\arraymincount{v}$.
\end{itemize}

The following procedure $\cmincount{v}{i}{j}$ returns the pair
\[ \mincount{\pi(P_v)}{i}{j},\rmq{\pi(P_v)}{i}{j}.\]
\begin{enumerate}
\item If $v$ is a leaf in the min-max tree, compute the answer using a lookup
table and return it.
\item $t \gets \partialsearch{S_v}{i}$ and $t' \gets \partialsearch{S_v}{j}$.
\item $s \gets \partialsum{S_v}{t-1}$ and $s' \gets \partialsum{S_v}{t'-1}$.
\item If $t=t'$:
\begin{enumerate}
\item $N,m \gets \cmincount{v_t}{i-s}{j-s}$.
\item Return $N,m+\arraylastpi{v}[t]$.
\end{enumerate}
\item If $i-s>1$:
\begin{enumerate}
\item $N_1,m_1\gets \cmincount{v_t}{i-s}{S_v[t]}$.
\item $m_1 \gets m_1+\arraylastpi{v}[t]$.
\end{enumerate}
else $m_1\gets \infty$ and $t\gets t-1$.
\item If $j-s' < S_v[t']$:
\begin{enumerate}
\item $N_3,m_3\gets \cmincount{v_{t'}}{1}{j-s'}$.
\item $m_3 \gets m_3+\arraylastpi{v}[t']$.
\end{enumerate}
else $m_3\gets \infty$ and $t'\gets t'+1$.
\item If $t+1 \leq t'-1$:
\begin{enumerate}
\item $m_2 \gets \rmq{\arraymin{v}}{t+1}{t'-1}$.
\item $N_2 \gets \minsum{\arraymin{v}}{\arraymincount{v}}{t+1}{t'-1}$.
\label{alg:minsum}
\end{enumerate}
else $m_2\gets \infty$.
\item $m \gets \min(m_1,m_2,m_3)$.
\item $N \gets \sum_{l \leq 3:m_l=m} N_l$.
\item Return $N,m$.
\end{enumerate}

The time complexity of line~\ref{alg:minsum} is $O(\log k) = O(\log\log n)$
and therefore the time complexity of a $\minsumb$ query is $O(\log n)$.

We say that a node $x$ of $T$ is \emph{heavy} if it has at least
$D = \lceil\log n\rceil^2$ children.
Our approach for handling $\degree{x}$ queries in $O(\log n/\log\log n)$ time
is to handle differently heavy nodes and light nodes.
In order to handle queries on heavy nodes the data structure stores
the following structures.

\begin{itemize}
\item A rank-select structure on a binary string $\substr{B}{1}{2n}$
in which $B[x] = 1$ if $P[x]$ is an opening parenthesis and $x$ is a heavy node.
\item An array $C$ containing $\degree{x}$ for every $x$ such that $B[x] =1$,
sorted by increasing order of $x$.
\end{itemize}
For both $B$ and $C$ we use dynamic succinct structures from
Navarro and Sadakane~\cite{NavarroS14}.
These structure have $O(\log n/\log\log n)$ query and update time.
Therefore, checking whether a node is heavy, and computing $\degree{x}$ for a
heavy node takes $O(\log n/\log\log n)$ time.
To bound the space for $B$ and $C$,
we use the fact that there are at most $n/D$ heavy nodes.
Therefore, the space for the rank-select structure on $B$ is
$nH_{0}(B)+o(n) = O((n/D)\log\frac{2n}{n/D}) = o(n)$ bits,
and the space for the array $C$ is
$(1+o(1))|C|\log n\leq (1+o(1))n/D\cdot\log n = o(n)$ bits.

For a light node $x$, we compute $\degree{x}$ using $\mincountb$ query.
In addition to the local structures described above,
each internal node $v$ in the min-max tree stores an array
$\substr{\arraymincountcap{v}}{1}{k}$ in which
$\arraymincountcap{v}[i] = \min(D, \arraymincount{v}[i])$.
Recall that $\degree{x} = \mincount{\pi(P)}{x+1}{\enclose{x}-1}$,
and the latter expression can be computed by procedure $\cmincountb$.
Since $\degree{x} < D$, we can replace line~\ref{alg:minsum} in procedure
$\cmincountb$ by
$N_2 \gets \minsum{\arraymin{v}}{\arraymincountcap{v}}{t+1}{t'-1}$.
This line can be performed in constant time as follows.
Using the balanced parenthesis string of the Cartesian tree of $\arraymin{v}$
and a lookup table,
obtain in constant time a binary string $\substr{X}{1}{k}$ such that
$X[p]=1$ if $\arraymin{v}[p] = \rmq{\arraymin{v}}{t+1}{t'-1}$.
Since the space for storing the array $\arraymincountcap{v}$ is
$k\log D = o(\log n)$ bits and the space for storing $X$ is $k = o(\log n)$
bits, a lookup table is used to compute in constant time the sum of
$\arraymincountcap{v}[p]$ for every $p$ such that $X[p]=1$.

\paragraph{Updating the structures}
We first show how to update the local structures in the nodes of the min-max
tree.
Consider some internal node $v$ in the min-max tree, and let
$v_1,\ldots,v_k$ be its children.
The structure of Corollary~\ref{cor:partial-sums-negative-sum} and the rmq
structure on $\arraymin{v}$ can be updated in $O(1)$ time
(see Section~\ref{sec:fwd-search}).
An insertion or deletion of a character from $P$ can only change one entry
of $\arraymincount{v}$, namely the entry
$\arraymincount{v}[i]$ where $i$ is the index such that the changed
character belongs to $P_{v_i}$.
To see why this is true, note that for $j < i$, all values in
$\pi(P_v)_j$ do not change due to the character update.
Therefore, $\arraymincount{v}[j]$ does not change.
Additionally, for $j > i$, either all values in $\pi(P_v)_j$ are increased by
$1$ or all these values are decreased by $1$, so again,
$\arraymincount{v}[j]$ does not change.
Therefore, to update $\arraymincount{v}$ and $\arraymincountcap{v}$, we only
need to compute the value of $\arraymincount{v}[i]$. This can be done in
$O(\log k)=O(\log\log n)$ time by performing
a $\minsum{\arraymin{v_i}}{\arraymincount{v_i}}{1}{S_v[i]}$ query
(recall that the min-max tree nodes have local structures for $\minsumb$
queries).

We next show how to update the structures on $B$ and $C$.
When a node $x$ is inserted to the tree, perform two character insertions
on $B$, and if $x$ is heavy insert its degree to $C$.
Additionally, compute the degree of the parent $y$ of $x$.
If the insertion of $x$ changes $y$ from light to heavy or from
heavy to light, update $B$ and $C$ accordingly.
Therefore, the insertion of $x$ causes $O(1)$ changes on $B$ and $C$ which
are performed in $O(\log n/\log \log n)$ time.

We also need to handle the case when insertion or deletions of nodes
causes the value of $\lceil \log n\rceil$ to change.
Since our definition of a heavy node depends on $\lceil \log n\rceil$,
this means that a single node insertion or deletion can cause $\Theta(n/\log n)$
nodes to change their heavy/light status, requiring a $\Omega(n/\log n)$ time
to update $B$ and $C$.
The structure of Navarro and Sadakane already has a mechanism to handle changes
to $\lceil \log n\rceil$ since the sizes of the blocks of $P$ and the sizes
of the lookup tables used by the structure depend on $\lceil \log n\rceil$.
This mechanism works as follows.
The string $P$ is partitioned into three parts $P = P_0 P_1 P_2$.
A separate min-max tree is built on each part $P_i$.
The tree for $P_1$ uses the current value of $\lceil \log n\rceil$, while
the trees for $P_0$ and $P_2$ use $\lceil \log n\rceil-1$ and
$\lceil \log n\rceil+1$, respectively.
When a node is added to or deleted from $T$, the structure changes the partition
of $P$ by moving $O(1)$ characters between parts, and updating
the min-max trees after each movement.
We change the definition of $B$ as follows.
For an index $x$, $B[x] = 1$ if $P[x]$ is an opening parenthesis and
$\degree{x} \leq (\lceil \log n\rceil-2+i)^2$,
where $i$ is the index such that $P[x]$ is in $P_i$.
When the partition of $P$ changes, for every index $x$ such that
$P[x]$ changes its part, we need to compute $\degree{x}$ and then update $B[x]$.
If the value of $B[x]$ changes, a corresponding update on the array $C$ is
performed.
Using this approach, a single node insertion or deletion causes only $O(1)$
changes to $B$ and $C$ (recall that only $O(1)$ characters in $P$ move between
parts).
Additionally, at all times, if $x$ is heavy then $B[x] = 1$.
Thus, the query algorithm remains correct.

\bibliographystyle{plain}
\bibliography{ds,dekel}

\end{document}